\def\BibTeX{{\rm B\kern-.05em{\sc i\kern-.025em b}\kern-.08em
    T\kern-.1667em\lower.7ex\hbox{E}\kern-.125emX}}
\newtheorem{mypro}{Proposition}
\def\@cite#1#2{[{#1\if@tempswa , #2\fi}]}
\begin{document}

\title{Beamforming in Secure Integrated Sensing and \\
	Communication Systems with Antenna Allocation
}

\author{
	\IEEEauthorblockN{Yunxiang Shi\IEEEauthorrefmark{1}, Lixin Li\IEEEauthorrefmark{1}, Wensheng Lin\IEEEauthorrefmark{1}, Wei Liang\IEEEauthorrefmark{1}, and Zhu Han\IEEEauthorrefmark{2}}
	
	\IEEEauthorblockA{\IEEEauthorrefmark{1}School of Electronics and Information, Northwestern Polytechnical University,
		Xi'an, Shaanxi 710129, China}
	
	\IEEEauthorblockA{\IEEEauthorrefmark{2}University of Houston, Houston, USA}
	
	Email:\ shiyunxiang@mail.nwpu.edu.cn, \{lilixin, linwest, liangwei\}@nwpu.edu.cn, zhan2@uh.edu
	
	\thanks{This paper has been accepted for publication in IEEE Globecom 2024 workshop.}
	
}

\maketitle

\begin{abstract}
In this paper, we consider joint antenna allocation and transmit beamforming design in secure integrated sensing and communication (ISAC) systems. A dual-function base station (DFBS) aims to securely deliver messages to a single-antenna receiver while detecting potential eavesdroppers. To prevent eavesdropping, we incorporate specialized sensing signals, intentionally reducing communication signal power toward suspicious targets to improve sensing. We prioritize minimizing the matching error between the transmitting and required beampatterns for sensing and communication. Our design optimizes antenna allocation and beamforming at the DFBS, meeting minimum secrecy rate and power constraints. We propose solvers based on alternating optimization for the non-convex design problem. Simulations show that the antenna allocation scheme significantly improves safety performance.
\end{abstract}

\begin{IEEEkeywords}
Integrated sensing and communication, physical layer security,
transmit beamforming, antenna allocation.
\end{IEEEkeywords}

\section{Introduction}

Wireless emergency communications integrate sensing and localization to accurately detect and pinpoint the locations of individuals affected by natural disasters, enhancing the efficiency and effectiveness of rescue operations by providing real-time situational awareness and enabling targeted response efforts \cite{b1}. 5G and 6G technologies are gaining attention for revolutionizing communication with faster speeds and lower latency \cite{b2, Ma2024Latency}. Integrated sensing and communication (ISAC) systems, crucial for 5G and 6G networks, optimize spectrum use by combining sensing and communication in the same frequency \cite{b5,Wu2024ISAC}, reducing costs and enhancing performance. The spectrum supporting ISAC has a very wide coverage, including operation in the THz band. Moreover, ISAC design can be applied in multiple-input multiple-output (MIMO) systems, utilizing spatial diversity technology through beamformers to further improve performance \cite{b6}.

Due to the broadcast nature of wireless communication, the transmitted signals can be accessed by both legitimate users and eavesdroppers, thus the ISAC signals carrying the communication signals can be easily eavesdropped. Physical layer security (PLS) is an important innovative technology that can solve security issues in wireless communications from the physical level \cite{b7, Xiao2023Secure}.

The basic idea of PLS is to exploit the characteristics of the wireless channel, including noise, fading, interference, etc., so that the difference in performance between the legitimate receiver's link and the eavesdropper's link is significantly widened. In ISAC systems, physical layer security can also be used as a feasible solution. There has been some work applying physical layer security to ISAC systems \cite{b8}-\cite{b10}. In \cite{b8}, the authors consider a downlink secure ISAC system where a multi-antenna BS transmits classified information to a single-antenna UE while sensing a potential eavesdropper. In \cite{b9}, PLS in Reconfigurable Intelligent Surface (RIS)-assisted ISAC systems is studied, using Artificial Noise (AN) to interfere with eavesdroppers and maximize secrecy rates. In \cite{b10}, an AN-assisted secure downlink communication scheme for ISAC systems is explored, focusing on robust beamforming under imperfect CSI and probabilistic outages.

In addition, conventional digital beamforming schemes often require each element of the array antenna to be equipped with a separate radio frequency (RF) link, leading to higher hardware costs and design complexity \cite{b11}. The use of an antenna allocation strategy means that an optimization problem can be given greater design freedom without changing the number of antenna RF links, effectively replacing all-digital beamforming at a lower cost \cite{b12}.

Motivated by the above, this paper considers a secure ISAC system with DFBS equipped with a limited number of RF links, insufficient for designing each antenna. The DFBS handles communication and sensing tasks while preventing untrusted targets from stealing confidential information. We propose a sensing-centric design scheme to minimize the matching error between the transmitted and desired beampatterns under secrecy rate and RF link constraints. This is achieved by selecting suitable array antenna elements to transmit ISAC signals and designing the transmit beamformers. To solve the non-convex optimization problem, we propose an alternating optimization scheme based on one-dimensional search and penalized sequential convex programming. Simulation results demonstrate the effectiveness of the proposed scheme and the significant improvement brought by antenna allocation to the ISAC system. Our main contributions are summarized as follows:

\begin{itemize}
\item{We propose an alternative to digital beamforming in secure ISAC systems. Digital beamforming has high hardware cost and design complexity, so instead of equipping each antenna with a dedicated RF link, we choose a suitable number of antennas to match fewer RF links and transmit ISAC signals.}

\item{For the two different problems of security and antenna allocation, we propose an alternating optimization method, which can be solved efficiently by iterative solving. Experimental results show that our proposed scheme can allocate appropriate antenna elements to DFBSs according to the number of RF links under the precondition of security, and demonstrate that the use of antenna allocation in a secure ISAC system can effectively improve the perceptual and security performance of the ISAC system.}

\end{itemize}

\section{System Model}
We consider a downlink ISAC system as shown in Fig. 1, in which a DFBS communicates with a single antenna user and sense $T$ point targets in the far field of the DFBS. Among these targets, there are $T_u$ untrusted targets. The DFBS transmits both communication and sensing signals equipped with a uniform linear array (ULA) with $M>1$ antennas.

\subsection{Signal Model}
In the downlink, and without loss of generality, we assume that $ \mathcal T \triangleq \{ 1,...,T \}$ denotes the set of targets and $ \mathcal T_u \triangleq \{ 1,...,T_u \}$ denotes the set of untrusted targets. The DFBS jointly transmits a confidential communication stream $x_c[l]$, and $T$ sensing stream $\{x_t[l]\}_{t\in \mathcal{T}}$, where $l$ denoting the $l$'s symbol in communication/sensing streams. If $\bm{x}[l]\in\mathbb{C}^{M\times 1}$ denotes the transmit signal from the DFBS due to the $l$-th symbol, we can then write
\begin{equation}
	\label{eq1}
	\bm{x}[l]=\bm{w}_c x_c[l] + \sum_{t \in \mathcal{T}} \bm{w}_{s,t} x_t[l],
\end{equation}
where $\bm{w}_c\in\mathbb{C}^{M\times 1}$ denotes the communication beamforming vector, and $\bm{w}_{s,t}\in\mathbb{C}^{M\times 1}$ denotes the sensing beamforming vector for the stream $t \in \mathcal{T}$. The symbols are assumed to be independent random variable with zero mean and unit covariance. The beamforming vectors are subject to the total power constraint, $P_{all}$, given as
\begin{align}
	\label{eq2}
	\mathbb{E} [ \parallel \! \bm{x}[l] \! \parallel^{2} ] &= \parallel \! \bm{w}_c \! \parallel^{2} +  \sum_{t \in \mathcal{T}} \parallel \! \bm{w}_{s,t} \! \parallel^{2} \nonumber \\
	&= \text{Tr}(\bm{W}_c) +  \sum_{t \in \mathcal{T}} \text{Tr}(\bm{W}_{s,t}) = P_{all},
\end{align}
where $\bm{W}_c = \bm{w}_c \bm{w}_c^{H}$ and $\bm{W}_{s,t} = \bm{w}_{s,t} \bm{w}_{s,t}^{H}$. Note that $\bm{W}_{s,t}$ is a rank one matrix, i.e., $\text{rank}(\bm{W}_{st}) = 1$. Let $\bm{S} = \sum_{t \in \mathcal{T}} \bm{W}_{s,t}$, then $\bm{S}$ is a general rank, i.e., $0 \le T = \text{rank}(\bm{S}) \le M$.

\begin{figure}[!t]
	\centering{\includegraphics[width=3in]{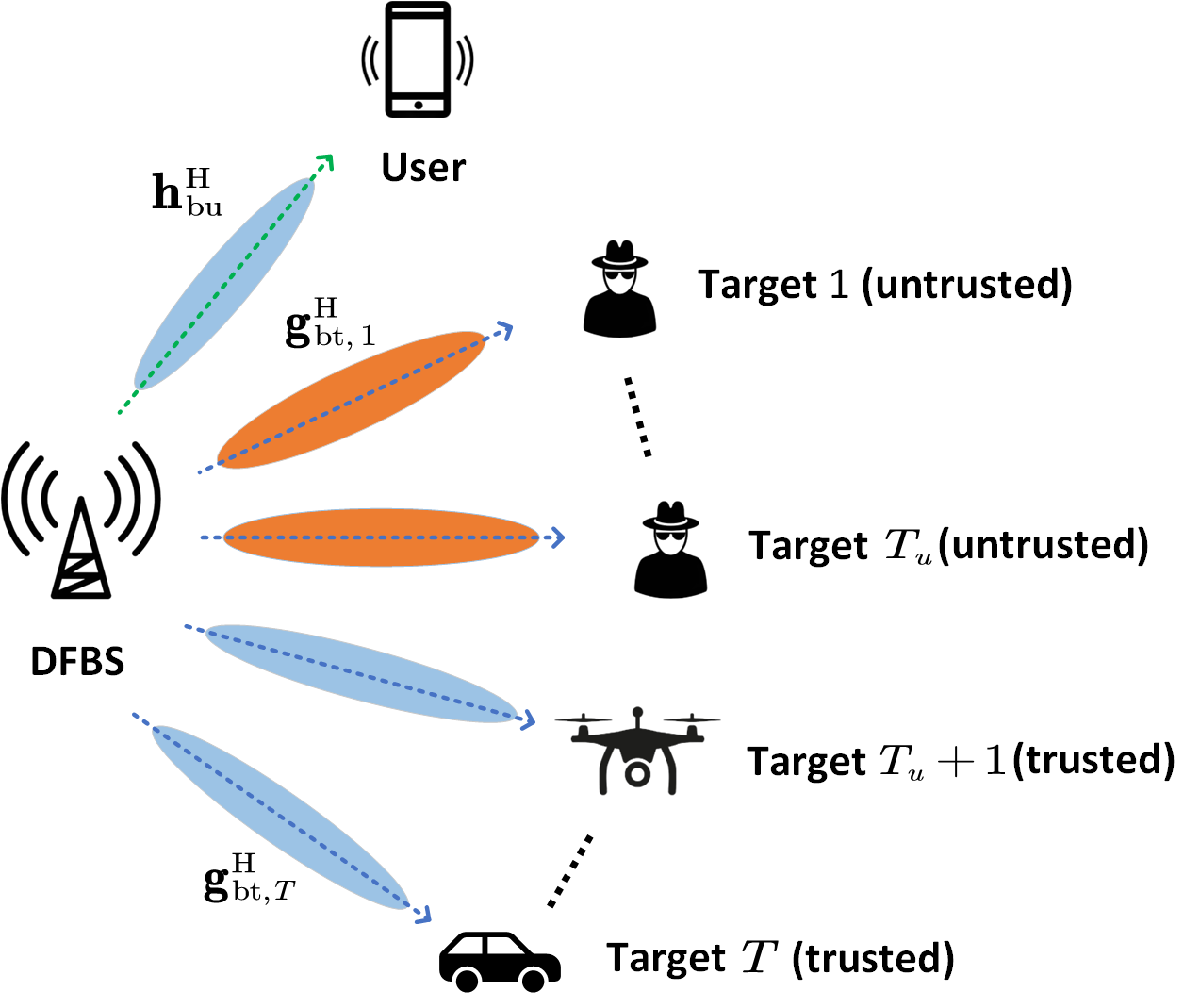}}
	\caption{A secure ISAC system with a ULA to perform communication and sensing functions.}
	\label{fig}
	\vspace{-1em}
\end{figure}

\subsection{Communication Model}
We denote the communication channel between the user and the DFBS as $\bm{h}_{bu} \in \mathbb{C}^{M\times 1}$. Next, considering a block fading channel model, where the channel remains constant over the transmission of the $L$ symbol, we can write the received signal at the user as
\begin{equation}
	\label{eq3}
	y_u[l]=\bm{h}_{bu}^H(\bm{w}_c x_c[l] + \sum_{t \in \mathcal{T}} \bm{w}_{s,t} x_t[l])+z,
\end{equation}
where $z\sim \mathcal C \mathcal N(0,\sigma^2_u)$ denotes the additive white Gaussian noise (AWGN) at the user receiver. Since sensing signals are equivalent to noise for communication signals, they can interfere with suspicious targets attempting to steal information. Accordingly, the user's communication signal-to-interference-plus-noise ratio (SINR) can be obtained as
\begin{align}
	\label{eq4}
	\text{SINR}_u &=\frac{{\mid\bm{h}_{bu}^H\bm{w}_c\mid}^2}{{\sum_{t \in \mathcal{T}} \mid\bm{h}_{bu}^H\bm{w}_{s,t}\mid}^2 + \sigma^2_u}
	= \frac{\bm{h}^H_{bu}\bm{W}_c\bm{h}_{bu}}{ \bm{h}^H_{bu}\bm{S}\bm{h}_{bu} + \sigma^2_u},
\end{align}

\subsection{Sensing Model}
For sensing model, we assume that both the DFBS and the targets are in low scattering areas and only consider the line of sight (LoS) channel between them. The angle of departure (AoD) from the BS to target $j \in \mathcal{T}$ is denoted by $\theta_j$, and the steering vector with angle $\theta_j$ is given as
\begin{equation}
	\label{eq5}
	\bm{a}(\theta_j)={[1,e^{j2\pi\frac{d}{\lambda}\text{sin}(\theta_j)},\dots,e^{j2\pi(M-1)\frac{d}{\lambda}\text{sin}(\theta_j)}]}^T,
\end{equation}
where $\lambda$ denotes the wavelength and $d$ denotes the spacing between two adjacent antennas. Then the DFBS-target channel $\bm{g}_{bt,j}$ for the $j$-th target is given by
\begin{equation}
	\label{eq6}
	\bm{g}^H_{bt,j} = \alpha_{bt,j}\bm{a}^H(\theta_j)\in\mathbb{C}^{1\times M},
\end{equation}
where $ \alpha_{bt,j}$ is the complex path gain. The received signal at the untrusted target $j$ is denoted as
\begin{equation}
	\label{eq7}
	y_{t,j}[l]=\bm{g}^H_{bt,j}\bm{w}_c x_c[l] + \sum_{t \in \mathcal{T}} \bm{g}^H_{bt,j}\bm{w}_{s,t} x_t[l]+z_j,
\end{equation}
where $z_j\sim \mathcal C \mathcal N(0,\sigma^2_j)$ denotes the AWGN at the $j$-th target receiver. Accordingly, the received SINR at the target $j$ is
\begin{equation}
	\label{eq8}
	\text{SINR}_j=\frac{{\mid\bm{g}^H_{bt,j}\bm{w}_c\mid}^2}{{\sum_{t \in \mathcal{T}} \mid\bm{g}^H_{bt,j}\bm{w}_{s,t}\mid}^2 + \sigma^2_j}
	= \frac{\bm{g}^H_{bt,j}\bm{W}_c\bm{g}_{bt,j}}{\bm{g}^H_{bt,j}\bm{S}\bm{g}_{bt,j} + \sigma^2_j}
\end{equation}

In this paper, we assume that the DFBS perfectly knows all the wireless channel state infromation (CSI). In practice, the channels $\bm{h}_{bu}$ and $\bm{g}_{bt,j}$ need to be estimated. We can use channel estimation techniques such as uplink pilots (respectively, the target echo signals) at the DFBS \cite{b13}.

\section{Problem Formulation}
The performance of the secure ISAC system can be considered from two aspects: communication and radar sensing metrics. With these metrics, the transmit beamformer and the antenna allocation scheme can be designed.

\subsection{Communication metrics}
According to \cite{b14}, it can be seen that the achievable secrecy rate of the multi-target ISAC communication system is determined by the difference between the user's achievable communication rate and the maximum achievable communication rate of all untrusted targets. In this case, the achievable secrecy rate at the user is given by
\begin{equation}
	\label{eq9}
	R_s=\mathop{\min}_{j\in{\mathcal T}_u} \ \log_2(1+\text{SINR}_u)-\log_2(1+\text{SINR}_j).
\end{equation}

\subsection{Radar sensing metrics}
Next, we use transmit power gain as the radar sensing performance metric. Radar systems detect and track targets by emitting signals in specific directions, achieved by designing multiple beams to match a desired beampattern.

In order to realize the detection function of the radar system and deliver information to the user, we need to ensure that targets and the user can obtain sufficient power. Let $\varphi_r$ denote the direction of the user from the DFBS, and $\epsilon$ denote the width degrees of rectangular beams with centers around $\{\theta_j\}^T_{j=1}$ and $\varphi_r$, the desired pattern is given by
\begin{equation}
	\label{eq10}
	D(\theta)=\left\{
	\begin{aligned}
		1 & , \ \ \text{if}\ \ \theta\in[\theta_j-\epsilon,\theta_j+\epsilon], \\
		1 & , \ \ \text{if}\ \ \theta\in[\varphi_r-\epsilon,\varphi_r+\epsilon], \\
		0 & , \ \ \text{elsewhere}.
	\end{aligned}
	\right.
\end{equation}
For any angle $\theta\in[-\frac{\pi}{2},\frac{\pi}{2}]$, the power radiated from the DFBS towards the direction $\theta$ is defined as the 
\begin{align}
	\label{eq11}
	J(\theta) &=\mathbb{E}[{\mid \bm{a}^H(\theta)(\bm{w}_c x_c[l] + \sum_{t \in \mathcal{T}} \bm{w}_{s,t} x_t[l])\mid}^2] \nonumber \\
	&=\bm{a}^H(\theta)(\bm{W}_c+\bm{S})\bm{a}(\theta),
\end{align}

An important metric to measure radar performance is beampattern matching error, which measures the difference between the desired beampattern and the actual beampattern. For a desired beampattern $D(\theta)$, the beampattern mismatch error evaluated over the $Q$ discrete sample angles $\{\tilde{\theta}_q\}^{Q}_{q=1}$ is
\begin{equation}
	\label{eq12}
	K=\frac{1}{Q}\mathop{\Sigma}_{q=1}^{Q}{\mid J(\tilde{\theta}_q)-\mu D(\tilde{\theta}_q)\mid}^{2},
\end{equation}
where $\mu$ is the unknown autoscale parameter. By increasing the value of $Q$, higher computational complexity can be chosed to get more accurate beampattern matching.

\subsection{Design problems}
We now consider the problems of designing the transmit beamformers and the antenna allocation scheme at the DFBS. Antenna allocation strategies can be an effective alternative to all-digital beamforming. 

To allocate the antennas at the DFBS, assuming that there are $G$ RF transmission links and $G<M$, the antenna allocation strategy refers to selecting the optimal $G$ antennas from $M$ antennas under the safe rate and transmit power constraints and designing the corresponding beamforming vectors so as to minimize the beampattern matching error. Define the binary vector $\bm{u} = [u_1, \dots, u_M]$. The entry $u_m$ is the antenna allocation coefficient for the $m$-th antenna. Hence, $u_m = 1$ if the $m$-th antenna is selected, and is zero otherwise. Accordingly, we formulate the design problem as
\begin{subequations}\label{eq13}
	\begin{align}		
		(\text{P1}):\: \mathop{\min}_{\bm{w}_c,\bm{w}_{s,t},\mu, \bm{u}} \ & \frac{1}{Q}\mathop{\Sigma}_{q=1}^{Q}{\mid J(\tilde{\theta}_q)-\mu D(\tilde{\theta}_q)\mid}^{2} \nonumber \\
		s.t. \ \ & R_s \ge R_0, \label{eq13a} \\
		& \parallel \! \bm{w}_c \! \parallel^{2} +  \sum_{t \in \mathcal{T}} \parallel \! \bm{w}_{s,t} \! \parallel^{2}=P_{all}, \label{eq13b} \\
		& | w_c(m) |^{2} +  \sum_{t \in \mathcal{T}} | w_{s,t}(m) |^{2}\le u_m P_b, \forall m, \label{eq13c} \\
		& u_m \in \{0,1\}, \label{eq13d} \\
		& \sum_{m = 1}^{M} u_m = G. \label{eq13e}
	\end{align}
\end{subequations}
Here, $R_0$ is the minimum secrecy rate constraint with (\ref{eq13b}), and $P_{all}$ is the total transmit power with (\ref{eq13c}). The $P_b$ is the maximum allowable power for each antenna at the DFBS. (\ref{eq13d}) indicates that each antenna in the ULA of the DFBS has two states, selected or unselected, and (\ref{eq13e}) limits the maximum number of antennas that can be selected, which is equal to the total number of available RF links.

\section{Proposed Method}

In this section, we develop solvers for (P1) based on alternating optimization, the details of the proposed methods are summarized in Algorithm 1.

Considering the problem is a non-convex because of the non-convex objective function, the non-convex secrecy rate in (\ref{eq13c}), and the binary variable $\{u_m\}_{m=1}^{M}$. In the following, we use the a penalized sequential convex programming (P-SCP) \cite{b15} to address the $\{u_m\}_{m=1}^{M}$. Specifically, the binary constraint (\ref{eq13d}) can be rewritten in equivalent form as:
\begin{subequations}\label{eq14}
	\begin{align}		
		& 0 \le u_m \le 1, m \in \mathcal{M}, \label{eq14a} \\
		& \sum_{m = 1}^{M} (u_m - u^2_m) \le 0. \label{eq14b}
	\end{align}
\end{subequations}
where $\mathcal{M} = \{1, \dots, M \} $ denotes the set of antenna indices, constraint (\ref{eq14b}) is the difference of a linear and a
quadratic term. Notice that $\{u_m\}_{m=1}^{M}$ is infinitely close to 0 or 1 when the left term of (\ref{eq14b}) is infinitely close to 0, so we can move the restriction (\ref{eq14b}) into the objective function and multiply it by a sufficiently large penalty factor $\eta$.

\begin{algorithm}[!h]
	\caption{Alternating Optimization Method for Secure ISAC Systems with Limited RF Links.}
	\label{alg1}
	\begin{algorithmic}[1]
		\renewcommand{\algorithmicrequire}{\textbf{Input:}}
		\renewcommand{\algorithmicensure}{\textbf{Output:}}
		
		\REQUIRE $\{ \theta_j \}^{T}_{j=1}$, $\varphi_r$, $P_{all}$,  $P_b$, $R_0$, $\eta$, and $\epsilon$.
		\ENSURE $\bm{w}_c$, $\bm{S}$, and $\bm{u}$.
		\STATE Initialize $\hat{\bm{u}}^{(0)}$ and set $k = 0$.
		\REPEAT
		\STATE Set $k = k + 1$.
		\REPEAT
		\STATE Solve Problem (SDR P5) to obtain ${\bm{W}^{(k)}_c}^{\ast}$, ${\bm{S}^{(k)}}^{\ast}$, ${\mu^{(k)}}^{\ast}$ , and ${\bm{u}^{(k)}}^{\ast}$ with $\hat{\bm{u}}^{(k-1)}$
		
		\STATE Obtain $\hat{\bm{w}}^{(k)}_c$, $\hat{\bm{S}}^{(k)}$, $\hat{\mu}^{(k)}$ and $\hat{\bm{u}}^{(k)}$ according to (\ref{eq24a}), (\ref{eq24b}), (\ref{eq24c}), and (\ref{eq24d}).
		
		\UNTIL{find $\lambda_G$ such that $f(\lambda_G)$ is the smallest according to (\ref{eq22})}.
		\UNTIL{convengence.}
		\RETURN $\bm{w}_c = \hat{\bm{w}}^{(k)}_c$, $\bm{S} = \hat{\bm{S}}^{(k)}$, and $\bm{u} = \hat{\bm{u}}^{(k)}$.
	\end{algorithmic}
\end{algorithm}

Then, problem (P1) can be reformulated as
\begin{subequations}\label{eq15}
	\begin{align}		
		(\text{P2}):\: \mathop{\min}_{\bm{w}_c,\bm{w}_{s,t},\mu,\bm{u}} \ & \frac{1}{Q}\mathop{\Sigma}_{q=1}^{Q}{\mid J(\tilde{\theta}_q)-\mu D(\tilde{\theta}_q)\mid}^{2} + \eta H(\bm{u})\nonumber \\
		s.t. \ \ & (\ref{eq13a}),(\ref{eq13b}),(\ref{eq13c}),(\ref{eq13e}), (\ref{eq14a}), \label{eq15a}
	\end{align}
\end{subequations}
where
\begin{equation}
	\label{eq16}
	H(\bm{u}) = \sum_{m = 1}^{M} (u_m - u^2_m).
\end{equation}
Notice that $\sum_{m = 1}^{M} u^2_m$ is a differentiable convex function, according to the subgradient inequality of convex functions, we have
\begin{equation}
	\label{eq17}
	\sum_{m = 1}^{M} u^2_m \ge \sum_{m = 1}^{M} {u^{(k)}_m}^2 + \sum_{m = 1}^{M} 2 u^{(k)}_m(u_m - u^{(k)}_m)
\end{equation}
where $u^{(k)}_m$ denotes the value of $u_m$ in the $l$-th iteration. We relax the quadratic term of $H(\bm{u})$ in the objective function of problem (P2) by replacing it with its lower bound, and the resulting new problem (P3) will minimize the upper bound of the objective function of problem (P2). We define
\begin{equation}
	\label{eq18}
	\tilde{H}(\bm{u}) = \sum_{m=1}^{M} u_m - {u^{(k)}_m}^2 - 2 u^{(k)}_m(u_m - u^{(k)}_m)
\end{equation}
Then problem (P2) can be solved iteratively by the following optimization problem (P3)
\begin{subequations}\label{eq19}
	\begin{align}		
		(\text{P3}):\: \mathop{\min}_{\bm{w}_c,\bm{w}_{s,t},\mu, \bm{u}} \ & \frac{1}{Q}\mathop{\Sigma}_{q=1}^{Q}{\mid J(\tilde{\theta}_q)-\mu D(\tilde{\theta}_q)\mid}^{2} + \eta \tilde{H}(\bm{u}) \nonumber \\
		s.t. \ \ & (\ref{eq13a}),(\ref{eq13b}),(\ref{eq13c}),(\ref{eq13e}), (\ref{eq14a}), \label{eq19a}
	\end{align}
\end{subequations}

Next, we deal with the constraint (\ref{eq13a}). Consider the non-convex constraint, accoriding to \cite{b16}, auxiliary optimizations $\lambda_G$ and $\beta$ are introduced. $\lambda_G$ denotes the maximum SINR among all untrusted targets, and $\beta=2^{R_0}(1+\lambda_G)-1$. Therefore, the problem (P3) can be reformulated as
\begin{subequations}\label{eq20}
	\begin{align}
		(\text{P4}):\: \mathop{\min}_{\bm{w}_c,\bm{w}_{s,t},\mu,\bm{u},\lambda_G} \ & \frac{1}{Q}\mathop{\Sigma}_{q=1}^{Q}{\mid J(\tilde{\theta}_q)-\mu D(\tilde{\theta}_q)\mid}^{2} + \eta \tilde{H}(\bm{u}) \nonumber \\
		s.t. \ \ & \bm{g}^H_{bt,j}\bm{W}_c\bm{g}_{bt,j} \le \lambda_G (\bm{g}^H_{bt,j}\bm{S}\bm{g}_{bt,j} + \sigma^2_j), \nonumber \\
		& \ \ \ \ \ \ \ \ \ \ \ \ \ \ \ \ \ \ \ \ \ \ \ \ \ \ \ \forall j \in {\mathcal T}_u, \label{eq20a} \\
		& \bm{h}^H_c \bm{W}_c \bm{h}_c \ge \beta (\bm{h}^H_c \bm{S} \bm{h}_c + \sigma^2_u), \label{eq20b} \\
		& (13b),(13c),(13e),(14a). \label{eq20c}
	\end{align}
\end{subequations}
Notice that the constraints in (\ref{eq20a}) and (\ref{eq20b}) are obtained by splitting the constraint in (\ref{eq13a}). For any given $\lambda_G > 0$, the optimization of $\bm{w}_c,\bm{w}_{s,t},\mu$, and $\bm{u}$ in problem (P4) becomes
\begin{subequations}\label{eq21}
	\begin{align}
		(\text{P5}):\: \mathop{\min}_{\bm{w}_c,\bm{w}_{s,t},\mu,\bm{u}} \ & \frac{1}{Q}\mathop{\Sigma}_{q=1}^{Q}{\mid J(\tilde{\theta}_q)-\mu D(\tilde{\theta}_q)\mid}^{2} + \eta \tilde{H}(\bm{u}) \nonumber \\
		s.t. \ \ & (13b),(13c),(13e),(14a),(20a),(20b). \label{eq21a}
	\end{align}
\end{subequations}

For the problem (P3), We can solve it in two steps. Let $f(\lambda_G)$ denote the optimal objective value achieved by problem (P5) with given $\lambda_G$. In the first step, we solve problem (P5) with any given $\lambda_G > 0$. Then, 1D search method is used for $\lambda_G$ in following problem (P3.1).
\begin{equation}
	\label{eq22}
	(\text{P3.1}):\: \mathop{\min}_{\lambda_G > 0} \ f(\lambda_G)
\end{equation}

However, problem (P5) is still a non-convex  quadratically constrained quadratic program (QCQP) problem. To resolve this issue, we use the technique of semidefinite relaxation (SDR) to relax the rank-one constraint and accordingly obtain the SDR version of problem (SDR P5) as
\begin{subequations}\label{eq23}
	\begin{align}
		(\text{SDR P5}):\: \mathop{\min}_{\bm{W}_c,\bm{S},\mu,\bm{u}} \ & \frac{1}{Q}\mathop{\Sigma}_{q=1}^{Q}{\mid J(\tilde{\theta}_q)-\mu D(\tilde{\theta}_q)\mid}^{2} + \eta \tilde{H}(\bm{u}) \nonumber \\
		s.t. \ \ & \text{Tr}(\bm{W}_c) + \text{Tr}(\bm{S}) = P_{all}, \label{eq23a} \\
		& \bm{W}_c(m,m) + \bm{S}(m,m) \le u_m P_b, \label{eq23b} \\
		& \bm{W}_c \succeq 0, \bm{S} \succeq 0, \label{eq23c} \\
		& (13e), (14a), (20a), (20b).
	\end{align}
\end{subequations}
It is observed that problem (SDR P5) is a convex quadratic semidefinite programing (QSDP) problem that can be solved optimally by convex solvers such as CVX \cite{b17}. Let $\bm{W}_c^{\ast}$, $\bm{S}^{\ast}$, $\mu^{\ast}$ and $\bm{u}^{\ast}$ denote the obtained optimal solution to problem (SDR P5), where $\bm{W}_c^{\ast}$ is generally of high rank. Based on $\bm{W}_c^{\ast}$, we can reconstruct an equivalent rank-one solution and accordingly find the optimal solution to problem (P5), as shown in the following proposition.

\begin{mypro}
	The optimal solution to problem (P5) is 
\begin{subequations}
	\label{eq24}
	\begin{align}
		& \hat{\bm{W}_c}=\hat{\bm{w}}_c\hat{\bm{w}}^H_c, \label{eq24a} \\
		& \hat{\bm{S}}=\bm{S}^{\ast}+\bm{W}^{\ast}_c-\hat{\bm{W}}_c, \label{eq24b} \\
		& \hat{\mu} = \mu^{\ast}, \label{eq24c} \\
		& \hat{\bm{u}} = \bm{u}^{\ast}, \label{eq24d}
	\end{align}
\end{subequations}
where $\hat{\bm{w}}_c=(\bm{h}^H_{bu} \bm{W}^{\ast}_c\bm{h}_{bu})^{-\frac{1}{2}} \bm{W}^{\ast}_c \bm{h}_{bu}$ denotes the corresponding transmit beamforming vector at the DFBS. Accordingly, $\hat{\bm{W}}_c$, $\hat{\bm{S}}$, $\hat{\mu}$ ,and $\hat{\bm{u}}$ are the optimal solution to problem (P5). 
\end{mypro}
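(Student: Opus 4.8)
The plan is to show that the rank-one reconstruction in \eqref{eq24} preserves feasibility for (P5) while leaving the objective value unchanged, which by optimality of the SDR relaxation forces $(\hat{\bm W}_c,\hat{\bm S},\hat\mu,\hat{\bm u})$ to be optimal for (P5). First I would verify that $\hat{\bm W}_c$ as defined is positive semidefinite and rank one, and that the scaling factor is well-defined: since $\bm W_c^{\ast}\succeq 0$ and $R_0>0$ implies the communication SINR at the user is strictly positive, we have $\bm h_{bu}^H\bm W_c^{\ast}\bm h_{bu}>0$, so the inverse square root exists. A direct computation shows $\bm h_{bu}^H\hat{\bm W}_c\bm h_{bu}=\bm h_{bu}^H\bm W_c^{\ast}\bm h_{bu}$, i.e. the reconstruction leaves the useful signal power at the user untouched.

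Next I would establish the matrix inequality $\bm W_c^{\ast}\succeq\hat{\bm W}_c$, which is the heart of the argument. Writing $\bm b=(\bm h_{bu}^H\bm W_c^{\ast}\bm h_{bu})^{-1/2}\bm W_c^{\ast}\bm h_{bu}$, one checks that $\hat{\bm W}_c=\bm b\bm b^H=\bm W_c^{\ast}\bm h_{bu}\bm h_{bu}^H\bm W_c^{\ast}/(\bm h_{bu}^H\bm W_c^{\ast}\bm h_{bu})$; applying the Cauchy--Schwarz / Schur-complement argument (for any $\bm v$, $(\bm v^H\bm W_c^{\ast}\bm h_{bu})(\bm h_{bu}^H\bm W_c^{\ast}\bm v)\le(\bm v^H\bm W_c^{\ast}\bm v)(\bm h_{bu}^H\bm W_c^{\ast}\bm h_{bu})$ since $\bm W_c^{\ast}\succeq 0$) gives $\bm v^H\hat{\bm W}_c\bm v\le\bm v^H\bm W_c^{\ast}\bm v$ for all $\bm v$, hence $\bm\Delta\triangleq\bm W_c^{\ast}-\hat{\bm W}_c\succeq 0$. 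Consequently $\hat{\bm S}=\bm S^{\ast}+\bm\Delta\succeq 0$, so \eqref{eq23c}/the PSD requirements hold, and trivially $\hat{\bm W}_c+\hat{\bm S}=\bm W_c^{\ast}+\bm S^{\ast}$, which makes the power-sum constraint \eqref{eq13b}, the per-antenna constraints \eqref{eq13c}, and the constraints \eqref{eq13e}, \eqref{eq14a} automatically satisfied because the sum $\bm W_c+\bm S$ and $\bm u$ are unchanged.

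Then I would check the two secrecy-related constraints. The objective \eqref{eq23}/\eqref{eq12} depends on the beamformers only through $\bm W_c+\bm S$ (see \eqref{eq11}), so it is invariant under the swap; likewise $\tilde H(\bm u)$ is unchanged since $\hat{\bm u}=\bm u^{\ast}$. For \eqref{eq20b}, the left side $\bm h_{bu}^H\hat{\bm W}_c\bm h_{bu}=\bm h_{bu}^H\bm W_c^{\ast}\bm h_{bu}$ is unchanged while the right side $\beta(\bm h_{bu}^H\hat{\bm S}\bm h_{bu}+\sigma_u^2)=\beta(\bm h_{bu}^H\bm S^{\ast}\bm h_{bu}+\bm h_{bu}^H\bm\Delta\bm h_{bu}+\sigma_u^2)$; but $\bm h_{bu}^H\bm\Delta\bm h_{bu}=\bm h_{bu}^H\bm W_c^{\ast}\bm h_{bu}-\bm h_{bu}^H\hat{\bm W}_c\bm h_{bu}=0$, so \eqref{eq20b} is preserved with equality-in-change. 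For \eqref{eq20a}, the left side $\bm g_{bt,j}^H\hat{\bm W}_c\bm g_{bt,j}\le\bm g_{bt,j}^H\bm W_c^{\ast}\bm g_{bt,j}$ decreases (by $\bm\Delta\succeq 0$), while the right side $\lambda_G(\bm g_{bt,j}^H\hat{\bm S}\bm g_{bt,j}+\sigma_j^2)=\lambda_G(\bm g_{bt,j}^H\bm S^{\ast}\bm g_{bt,j}+\bm g_{bt,j}^H\bm\Delta\bm g_{bt,j}+\sigma_j^2)$ increases, so the inequality \eqref{eq20a} is maintained. Hence $(\hat{\bm W}_c,\hat{\bm S},\hat\mu,\hat{\bm u})$ is feasible for (SDR P5) with $\hat{\bm W}_c$ rank one (rank at most one, and exactly one whenever the SINR is positive), achieving the same objective; since (SDR P5) is a relaxation of (P5) and this point lies in the feasible set of (P5), it is optimal for (P5). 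I expect the main obstacle to be the matrix-inequality step $\bm W_c^{\ast}\succeq\hat{\bm W}_c$ — stating it cleanly via a Schur complement (the block matrix $\begin{bmatrix}\bm W_c^{\ast} & \bm W_c^{\ast}\bm h_{bu}\\ \bm h_{bu}^H\bm W_c^{\ast} & \bm h_{bu}^H\bm W_c^{\ast}\bm h_{bu}\end{bmatrix}\succeq 0$) and handling the degenerate case $\bm h_{bu}^H\bm W_c^{\ast}\bm h_{bu}=0$, which is excluded by $R_0>0$, deserves the most care; everything else is bookkeeping on the invariance of $\bm W_c+\bm S$.
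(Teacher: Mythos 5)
Your proposal is correct and follows essentially the same route as the paper's Appendix A: the central step is the identical Cauchy--Schwarz argument establishing $\bm{W}^{\ast}_c - \hat{\bm{W}}_c \succeq 0$, followed by the same checks that $\hat{\bm{W}}_c+\hat{\bm{S}}=\bm{W}^{\ast}_c+\bm{S}^{\ast}$ preserves the power constraints and the objective, that \eqref{eq20b} holds with equality, and that \eqref{eq20a} only relaxes. Your added care about well-definedness of the scaling ($\bm{h}^H_{bu}\bm{W}^{\ast}_c\bm{h}_{bu}>0$) and the Schur-complement restatement are minor refinements the paper omits, but they do not change the argument.
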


\begin{proof}
	See Appendix A.
\end{proof}

\textit{Remark 1:} Notice that $\bm{S} = \sum_{t \in \mathcal{T}} \bm{W}_{s,t}$ and $\text{rank}(\bm{S}) = T$. This corresponds to the case where there are $T$ sensing beams, each of which can be obtained by an eigenvalue decomposition (EVD) of $\bm{S}$, which in turn yields the corresponding beamforming vectors $\{\bm{w}_{s,t}\}^{T}_{t=1}$ of the sensing signal.

\section{Simulation Results}

\begin{figure} [t!]
	\centering
	\subfloat[With antenna allocation.]{\label{fig2:a}
		\includegraphics[width=2.7in]{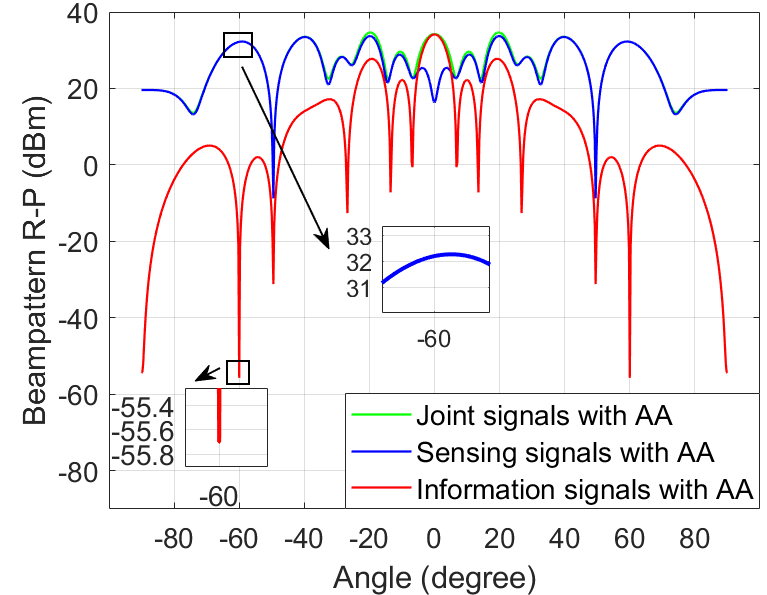}}
	\\
	\subfloat[Without antenna allocation.]{\label{fig2:b}
		\includegraphics[width=2.7in]{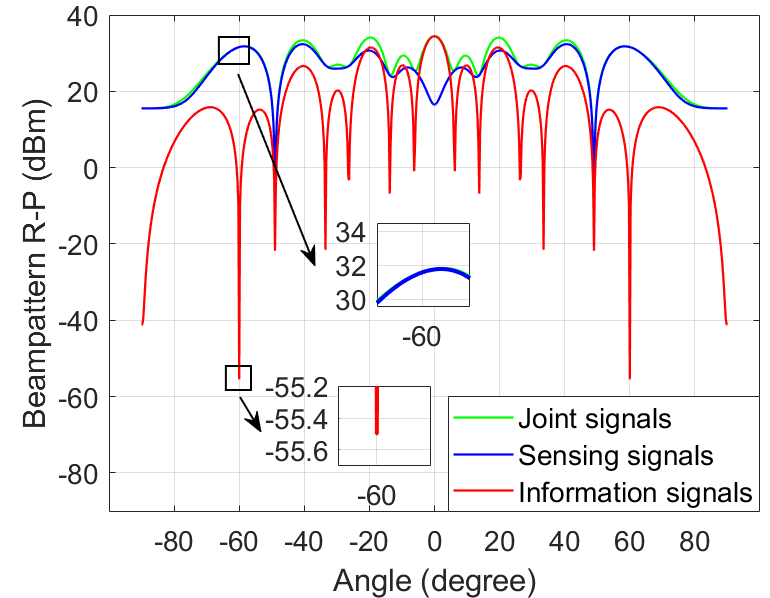}}
	\caption{Beampattern radiated power under different designs, where $R_0 = 5$bps/Hz, $P_{all} = 30$dBm.}
	\label{fig2}
	\vspace{-1em}
\end{figure}

This section illustrates the benefits of introducing an antenna allocation scheme in a secure ISAC system through numerical experiments. In the simulation, the DFBS is modeled as a ULA with $M=16$ half-wavelength spaced elements, and there are $T=6$ targets located at angles $-60^{\circ}$, $60^{\circ}$, $-40^{\circ}$, $40^{\circ}$, $-20^{\circ}$, and $20^{\circ}$. The targets located at $-60^{\circ}$ and $60^{\circ}$ are assumed to be untrusted eavesdroppers. The user is located at $0^{\circ}$.

We model both the communication channel $\bm{h}^{H}_{bu}$ and the sensing channel $\{ \bm{g}^{H}_{bt,j} \}_{j \in \mathcal{T}}$ as line-of-sight channels. The user and all eavesdroppers have the same receiver noise power of $-60\text{dBm}$. For an ideal beampattern $D(\tilde{\theta}_l)$, the width of rectangular box is set as $\epsilon = 5^{\circ}$. For simplicity, in the following figures, we denote antenna allocation as AA, user radiated power as R-P, and target illumination power as I-P.

\begin{figure} [t!]
	\centering
	\includegraphics[width=2.7in]{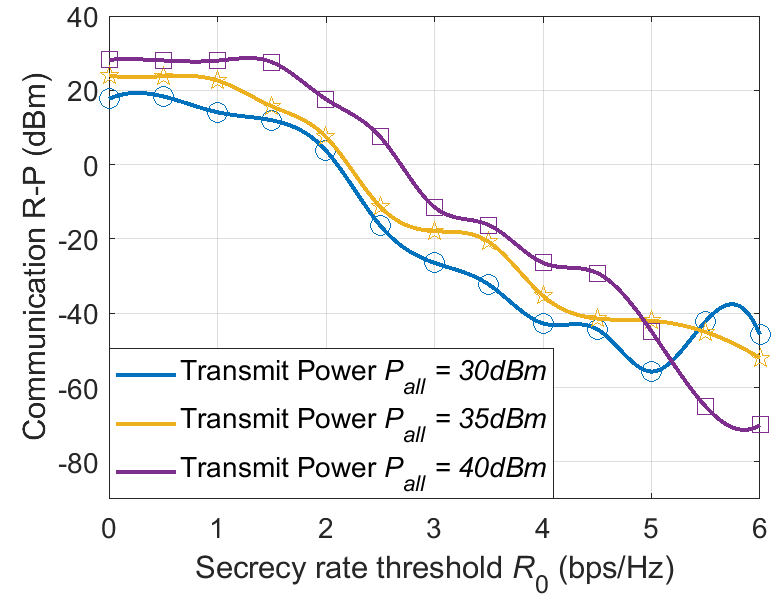}
	\caption{Communication signals radiated power directed to untrusted targets versus the secrecy rate threshold $R_0$.}
	\label{fig3}
\end{figure}

\begin{figure} [t!]
\vspace{-1em}
	\centering
	\includegraphics[width=2.7in]{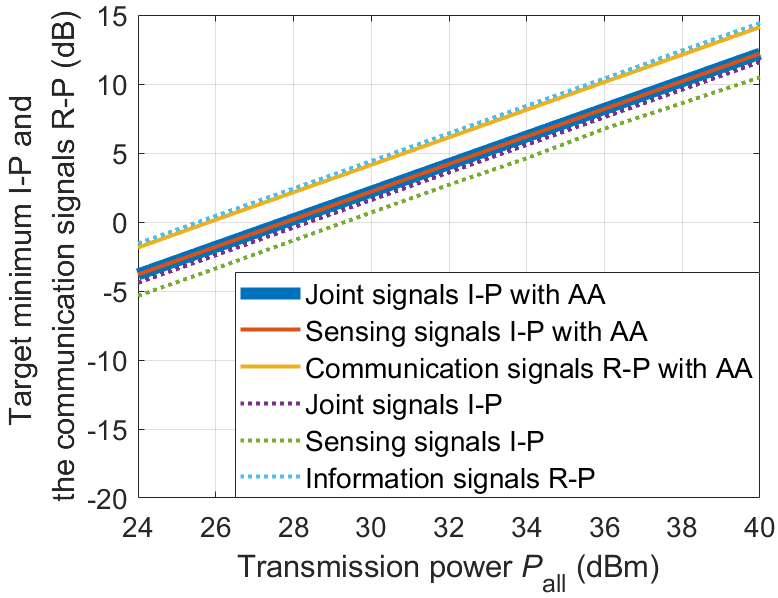}
	\caption{Target minimum radiated power and the communication signals radiated power with $R_0 = 5$bps/Hz. }
	\label{fig4}
	\vspace{-1em}
\end{figure}

Fig. 2 gives the beam direction map gain with and without the antenna assignment scheme in both cases. From Fig. 2(a), it can be seen that the peak gain of the beam direction map of the sensed signal with the antenna assignment scheme is higher and the gain at the same angle is higher compared to Fig. 2(b). In terms of security performance, Fig. 2(a) shows that the antenna allocation scheme makes the antenna beam direction map gain lower at the suspicious target (eavesdropper), which increases the difficulty for the eavesdropper to obtain the communication information.

Fig. 3 gives the communication power at the suspected target under different minimum secrecy rate constraints. Based on the symmetry design, the communication power of the two eavesdroppers remains consistent. It can be seen that the power of the information signals towards the eavesdroppers for different transmit power scenarios shows a decreasing trend as the secrecy rate threshold increases. It is worth noting that for transmit power $P_{all} = 30$, the communication power instead rises after the secrecy rate threshold $R_0 > 5$, which is due to the fact that the current transmit power is not sufficient to provide such a high secrecy rate.

In Fig. 4, we give the minimum illumination power pointing in the direction of each target as well as the variation of the communication signal power pointing towards the user for different transmit power $P_{all}$. In this case, both the joint and sensing signal transmit power with the antenna allocation scheme is better than the case without the antenna allocation scheme. In the user direction, the communication signal transmit power with the antenna allocation scheme is slightly lower than the result without the antenna allocation scheme. The above results show that the antenna allocation scheme adjusts the power allocation structure of the DFBS and improves the safety performance of the secure ISAC system, which in turn proves the effectiveness of the scheme.

\section{Conclusion}
In this paper, we consider joint beamforming and antenna assignment in a secure ISAC system. Selection of appropriate antennas for the RF link of DFBS can effectively improve the safety performance in ISAC systems. By slightly reducing power toward the communication user, the illumination power of other targets including the suspicious targets is improved and the power of the information signal pointing in the direction of the suspicious targets is further reduced, which further improves the security of the ISAC system with enhanced perception performance, which proves the effectiveness of our proposed scheme.

\section*{Appendix A \\ Proof of Proposition 1}
From (\ref{eq24a}), we can see that $\hat{\bm{S}} + \hat{\bm{W}}_c = \bm{S}^{\ast}+\bm{W}^{\ast}_c $, and hence $\hat{\bm{S}}$ and $\hat{\bm{W}}_c$ satisfy the constraint (\ref{eq23a}). According to (\ref{eq24d}), it is clear that $\hat{\bm{S}}$, $\hat{\bm{W}}_c$, and $\hat{\bm{u}}$ satisfy the constraint (\ref{eq23b}).

Notice that $\bm{W}^{\ast}_c \succeq 0$, we have $\bm{W}^{\ast}_c = \tilde{\bm{W}} \tilde{\bm{W}}^{H}$. Based on this, for any $\bm{z} \in \mathbb{C}^{M \times 1}$, it follows that
\begin{align}
	\label{eq25}
	&\bm{z}^H (\bm{W}^{\ast}_c - \hat{\bm{W}}_c) \bm{z} = \bm{z}^H \bm{W}^{\ast}_c \bm{z} - \bm{z}^H \frac{\bm{W}^{\ast}_c \bm{h}_{bu} \bm{h}^{H}_{bu} \bm{W}^{\ast}_c}{\bm{h}^H_{bu} \bm{W}^{\ast}_c\bm{h}_{bu}} \bm{z} \nonumber \\
	&= \frac{1}{\bm{h}^H_{bu} \bm{W}^{\ast}_c\bm{h}_{bu}} (\bm{z}^H \bm{W}^{\ast}_c \bm{z} \bm{h}^H_{bu} \bm{W}^{\ast}_c\bm{h}_{bu} - \bm{z}^{H} \bm{W}^{\ast}_c \bm{h}_{bu} \bm{h}^{H}_{bu} \bm{W}^{\ast}_c \bm{z} ) \nonumber \\
	& = \frac{1}{\bm{h}^H_{bu} \bm{W}^{\ast}_c\bm{h}_{bu}} (\parallel \! \bm{a} \! \parallel^2 \parallel \! \bm{b} \! \parallel^2 - | \bm{a}^H \bm{b} |^2) \ge 0,
\end{align}
where $\bm{a} = \tilde{\bm{W}}^{H} \bm{z} \in \mathbb{C}^{M \times 1}$, $\bm{b} = \tilde{\bm{W}}^{H} \bm{h}_{bu} \in \mathbb{C}^{M \times 1}$, and the condition for the inequality to hold in the last step of the (\ref{eq25}) is the Cauchy-Schwarz inequality. Therefore, $\bm{W}^{\ast}_c - \hat{\bm{W}}_c \succeq 0$ holds. By using this together with (24b), we have $\hat{\bm{S}} \succeq \bm{S}^{\ast} \succeq 0$. So the constraint (\ref{eq23c}) holds for $\hat{\bm{S}}$ and $\hat{\bm{W}_c}$.

It is clear that from (\ref{eq24a}) and (\ref{eq24b}) that $\bm{h}^{H}_{bu} \bm{W}^{\ast}_c \bm{h}_{bu} = \bm{h}^{H}_{bu} \hat{\bm{W}}_c \bm{h}_{bu}$ and $\bm{h}^{H}_{bu} \bm{S}^{\ast} \bm{h}_{bu} = \bm{h}^{H}_{bu} \hat{\bm{S}} \bm{h}_{bu}$, and therefore, constraint (\ref{eq20b}) holds. Considering that $\bm{W}^{\ast}_c - \hat{\bm{W}}_c \succeq 0$ and $\hat{\bm{S}} \succeq \bm{S}^{\ast} \succeq 0$, we have
\begin{align}
	\label{eq26}
	& \bm{g}^{H}_{bt,j} \hat{\bm{W}}_c \bm{g}_{bt,j} \le \bm{g}^{H}_{bt,j} \bm{W}^{\ast}_c \bm{g}_{bt,j} \le \lambda_G(\bm{g}^{H}_{bt,j} \bm{S}^{\ast} \bm{g}_{bt,j} + \sigma^2_j) \nonumber \\
	& \le \lambda_G(\bm{g}^{H}_{bt,j} \hat{\bm{S}} \bm{g}_{bt,j} + \sigma^2_j), \forall j \in \mathcal{T}_u.
\end{align}
As a result, constraint (\ref{eq20a}) is satisfied.

Finally, consider the objective function, from the expression of the objective function, it is clear that $\hat{\bm{W}}_c$, $\hat{\bm{S}}$, $\hat{\mu}$ and $\hat{\bm{u}}$ achieve the same objective value for problem (SDR P5) as that achieved by $\bm{W}^{\ast}_c$, $\bm{S}^{\ast}$, $\mu^{\ast}$ and $\bm{u}^{\ast}$.

From the above proof it can be concluded that $\hat{\bm{W}}_c$, $\hat{\bm{S}}$, $\hat{\mu}$ and $\hat{\bm{u}}$ are optimal solutions to the problem (SDR P5). Notice that $\text{rank}(\hat{\bm{W}}_c) = 1$ with $\hat{\bm{W}}_c = \hat{\bm{w}}_c \hat{\bm{w}}^{H}_c$. Therefore, $\hat{\bm{W}}_c$, $\hat{\bm{S}}$, $\hat{\mu}$ and $\hat{\bm{u}}$ are also optimal for problem (P5). Thus completing the proof of proposition 1.

\end{document}